\newtheorem{theorem}{Theorem}[section]
\newtheorem{lemma}[theorem]{Lemma}
\begin{document}
%
\title{Compression and Sieve: Reducing Communication in Parallel Breadth
  First Search on Distributed Memory Systems}

\author{\IEEEauthorblockN{Huiwei Lv\IEEEauthorrefmark{1}\IEEEauthorrefmark{2},
    Guangming Tan\IEEEauthorrefmark{1}, Mingyu Chen\IEEEauthorrefmark{1},
    Ninghui Sun\IEEEauthorrefmark{1}}
  \IEEEauthorblockA{\IEEEauthorrefmark{1}State Key Laboratory of Computer Architecture,\\
    Institute of Computing Technology, Chinese Academy of Sciences\\
    \IEEEauthorrefmark{2}Graduate University of Chinese Academy of Sciences\\
    Beijing, China 100190\\
    Email: lvhuiwei@ncic.ac.cn, tgm@ict.ac.cn, cmy@ict.ac.cn, snh@ncic.ac.cn} }

\maketitle

\begin{abstract}
  For parallel breadth first search (BFS) algorithm on large-scale distributed
  memory systems, communication often costs significantly more than arithmetic
  and limits the scalability of the algorithm. In this paper we sufficiently
  reduce the communication cost in distributed BFS by compressing and sieving
  the messages. First, we leverage a bitmap compression algorithm to reduce
  the size of messages before communication.  Second, we propose a novel
  distributed directory algorithm, cross directory, to sieve the redundant
  data in messages.  Experiments on a 6,144-core SMP cluster show our
  algorithm outperforms the baseline implementation in Graph500 by 2.2 times,
  reduces its communication time by 79.0\%, and achieves a performance rate of
  12.1 GTEPS (billion edge visits per second).
\end{abstract}


%
\IEEEpeerreviewmaketitle

\section{Introduction}
\label{sec:intro}
Recently, graph has been extensively used to abstract complex systems and
interactions in emerging ``big data'' applications, such as social network
analysis, World Wide Web, biological systems and data mining. With the
increasing growth in these areas, petabyte-sized graph datasets are produced
for knowledge discovery~\cite{Bader:2007,Lumsdaine:2007}, which could only be
solved by distributed machines; benchmarks, algorithms and runtime systems for
distributed graph have gained much popularity in both academia and
industry~\cite{graph500,Yoo:2005,Buluc:2011,Malewicz:2009}. One of the most
widely used graph-searching algorithms is breadth-first search (BFS), which
serves as a building block for a great many graph algorithms such as minimum
spanning tree, betweenness centrality, and shortest
paths~\cite{Chazelle:2000,tan-cyclops64,bc-brandes,Cherkassky:1996}.

Implementing a distributed BFS with high performance, however, is a
challenging task because of its expensive communication
cost~\cite{Chan05cgmgraph,Lumsdaine:2007}. Generally, algorithms have two
kinds of costs: arithmetic and communication. For distributed algorithms,
communication often costs significantly more than arithmetic. For example, on
a 512-node cluster, the baseline BFS algorithm in Graph 500 spends about 70\%
time on communication during its traversal on a scale-free graph with 8
billion vertices (Figure~\ref{fig:intro-percent}).  Therefore the most
critical task in a distributed BFS algorithm is to minimize its communication.

\begin{figure}[t]
  \centering
  \includegraphics[width=0.4\textwidth]{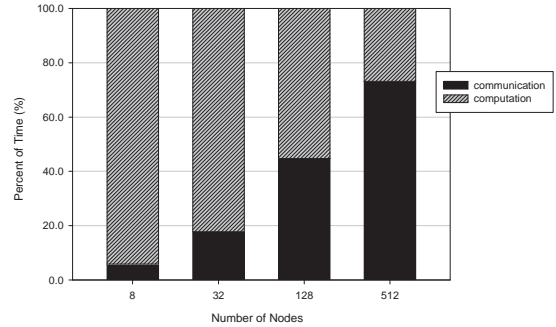}
  \caption{Time breakdown of a baseline distributed BFS in a weak scaling
    experiment that use fixed problem size per node (each node has about 16M
    vertices).}
  \label{fig:intro-percent}
\end{figure}
\begin{table}[t]
  \caption{Comparison of various approaches for reducing communication cost in
    distributed BFS.}
  \label{table:comparison}
  \begin{center}
     \begin{tabularx}{0.45\textwidth}{ll}
       \toprule%
       Approach & Category \\
       \midrule
       Two-dimensional partitioning~\cite{Yoo:2005,Buluc:2011} & algorithm \\
       Bitmap \& sparse vector~\cite{graph500,Buluc:2011} & data structure \\
       PGAS with communication coalescing~\cite{Cong:2010} & runtime \\
       \midrule
       \textit{This Work}: compression \& sieve & data structure \\
       \bottomrule
    \end{tabularx}
  \end{center}
\end{table}

Several different approaches are proposed to optimize communication in
distributed BFS (Table~\ref{table:comparison}): using two-dimensional
partitioning of the graph to reduce communication
overhead~\cite{Yoo:2005,Buluc:2011}, using bitmap or sparse vector to reduce
the size of messages~\cite{graph500,Buluc:2011}, or applying communication
coalescing in PGAS implementation to minimize message
overhead~\cite{Cong:2010}. These approaches attack the problem from different
angles: algorithm, data structure and runtime. In this paper, we will focus on
reducing the size of communication messages (the optimization of data
structures).  The main techniques we use are \emph{compression} and
\emph{sieve}. Overall, we make the following contributions:
\begin{itemize}
\item By compressing the messages, we reduce the communication time by $52.4\%$
  and improved its overall performance by $1.7\times$ compared to the baseline BFS
  algorithm. 
\item By sieving the messages with a novel distributed directory before
  compression. We further reduce the communication by $55.9\%$ and improved the
  performance by another $1.3 \times$, achieving a total $79.0\%$ reduction in
  communication and $2.2 \times$ performance improvement over the baseline
  implementation. 
\item We implement and analyse several compression methods for bitmap
  compression. Our experiment shows the space-time tradeoff of different
  compression methods. 
\end{itemize}

In the next section we will introduce the problem with an
example. Section~\ref{sec:baseline-bfs} will describe the baseline BFS
algorithm. Section~\ref{sec:spmv-bfs} and Section~\ref{sec:dir-bfs} will
describe our BFS algorithms with compression and sieve. The analysis and
experiment results are presented in Section~\ref{sec:ana} and
Section~\ref{sec:exp}, followed by related works and concluding remarks in
Section~\ref{sec:related} and Section~\ref{sec:cls}.

\section{Motivation}
\label{sec:pre}

\begin{figure}[t]
  \centering
  \includegraphics[width=0.48\textwidth]{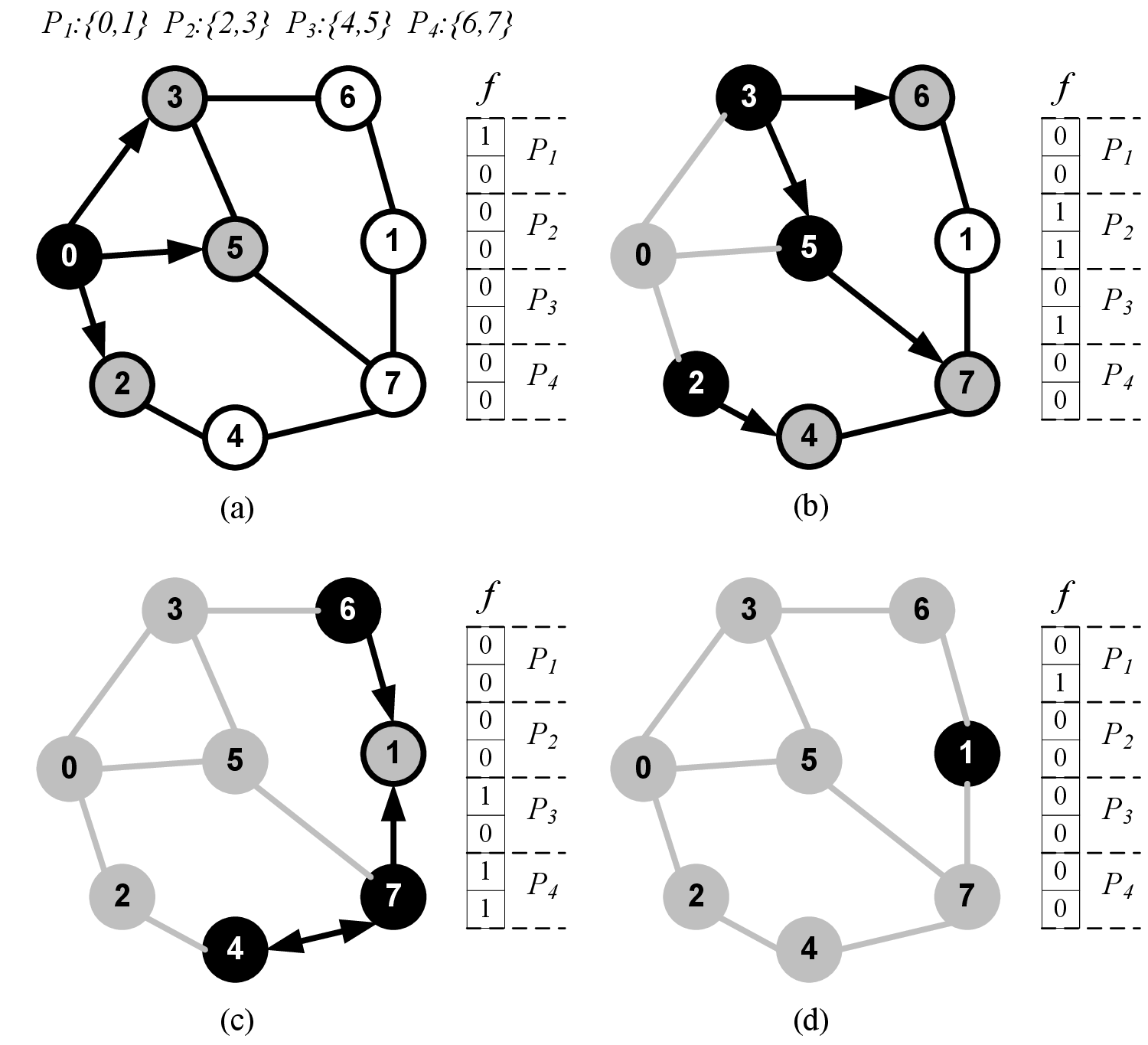}
  \caption{The operation of BFS on an undirected graph. The frontier $f$ is
    represented as a vector.}
  \label{fig:bfs-example-all}
\end{figure}
We start with an example illustrating the breadth-first search (BFS)
algorithm. Given a graph $G = (V, E)$ and a distinguished source vertex $s$,
breadth-first search systematically explores the edges of $G$ to ``discover''
every vertex that is reachable from $s$. In Figure~\ref{fig:bfs-example-all},
the source vertex $0$ is painted black when the algorithm begins. Then it
explores its adjacent vertices: $3$, $5$ and $2$, and paints them black. The
exploration goes on until all vertices are visited. Vertices discovered the
first time is painted black; discovered vertices are painted solid grey;
vertices to be discovered are painted grey with black edge. The frontier $f$
of the graph is the set of the vertices which are discovered the first time.

For distributed BFS, the vertices as well as the frontier are divided among
processors: $P_1:\{0,1\}$, $P_2:\{2,3\}$, $P_3:\{4,5\}$, $P_4:\{6,7\}$. And
the global information of the frontier can only be retrieved through
communication. For $P_1$ in this example, it only ``owns'' the information of
whether vertex $0$ and $1$ are visited. If it want to identify whether vertex
$2$ is visited, it needs to ask this information from $P_2$. The common way to
update the global $f$ is to use MPI collective communication like
\textsc{Allgather} at the end of each
level~\cite{Yoo:2005,Buluc:2011,graph500}.

The most critical task for distributed BFS is to reduce the size of the
frontier, which directly influence the size of the messages communicated.  To
reduce it, bitmap or sparse vector is commonly used to represent the
frontier. Bitmap use a vector of size $|V|$ to represent the frontier, each
bit of the vector representing a vertex: $1$ means it is included in the
frontier, $0$ means it is not. Sparse vector includes the frontier vertices
only, each is represented using 64 bits. For graphs of diameter $d$, bitmap is
generally better when $d < 64$. Table~\ref{table:sparsity} provides an example
of the size of the frontier represented as bitmap or sparse vector, for a
scale-free graph of 1.6 billion vertices. In this case, for $d=7$, the total
size of messages using bitmap is 1.4 GB, much less than the sparse vector's
12.4 GB.
\begin{table}[t]
  \caption{The size of the frontier, represented as bitmap or sparse vector,
    at each level of BFS of a scale-free graph with 1.6 billion vertices. For
    sparse vector, each vertex is represented as a 64-bit number.}
  \label{table:sparsity}
  \newcolumntype {Z} {>{\centering\arraybackslash}X}
  \begin{center}
     \begin{tabularx}{0.45\textwidth}{lZZZ}
       \toprule%
       Level & \#Vertices & bitmap & sparse vector\\
       \midrule
       1 & 2 &  196.9MB & 16B\\
       2 & 20842  & 196.9MB & 162.8KB\\
       3 & 235274348  & 196.9MB & 2.0GB\\
       4 & 1377666413  & 196.9MB & 10.2GB\\
       5 & 38582585  & 196.9MB & 294.4MB\\
       6 & 88639 & 196.9MB & 692.4KB\\
       7 & 211 & 196.9MB & 1.69KB\\
       \midrule
       Total & 1651633040 & 1.4GB & 12.4GB \\
       \bottomrule
    \end{tabularx}
  \end{center}
\end{table}
Despite the huge space saved by bitmap, there remains two problems:
\begin{itemize}
\item The problem of bitmap is that it need to contain \emph{all} the vertices
  to keep the position information of each vertex. For the above example, to
  represent 2 vertices at level 1, the size of the bitmap frontier is still
  196.9MB, where most of the elements are zero. Fortunately, these zeros can
  be condensed.  \emph{We leverage lossless compression to reduce the size of
    the bitmap}.
\item The other problem is the expensive broadcast cost of the
  \textsc{Allgather} collective communication, which broadcasts \emph{all}
  vertices to \emph{all} processors. In fact, each processor needs only a
  small fractions of the frontier.  For example, in
  Figure~\ref{fig:bfs-example-all} (b), $P_2$ does not need to send the
  information of vertex $2$ to $P_4$, because vertex $2$ does not has a direct
  edge connecting to the vertices of $P_4$.  \emph{We propose a distributed
    directory to sieve the bitmap vectors before compression, further reducing
    its message size}.
\end{itemize}

\section{Baseline BFS with Bitmap as Frontier}
\label{sec:baseline-bfs}

\subsection{BFS Described in Linear Algebra}
\label{sec:preliminary}

Let $A$ denote the adjacency matrix of the graph $G$, $f_{Lk}$ denote the
frontier at level $k$, and $\pi_k = \bigcup^{k}_{i=1}f_{Li}$ denote the
visited information of previous frontiers. The exploration of level $k$ in BFS
is algebraically equivalent to a sparse matrix vector multiplication (SpMV):
$f_{L(k+1)} \gets A^T \otimes f_{Lk} \odot \overline{\pi_k}$ (we will omit the
transpose and assume that the input is pre-transposed for the rest of this
section).
For example, traversing from level one (Figure~\ref{fig:bfs-example-all} (a))
to level two (Figure~\ref{fig:bfs-example-all} (b)) is equivalent to the
linear algebra below.
\[
A^T \otimes f_{L0} \odot \overline{\pi_0} =
 \begin{pmatrix}
   0 0 1 1 0 1 0 0\\
   0 0 0 0 0 0 1 1\\
   1 0 0 0 1 0 0 0\\
   1 0 0 0 0 1 1 0\\
   0 0 1 0 0 0 0 1\\
   1 0 0 1 0 0 0 1\\
   0 1 0 1 0 0 0 0\\
   0 1 0 0 1 1 0 0\\
 \end{pmatrix}
\otimes
 \begin{pmatrix}
   1\\   0\\   0\\   0\\   0\\   0\\   0\\   0\\
 \end{pmatrix}
\odot
 \begin{pmatrix}
   0\\   1\\   1\\   1\\   1\\   1\\   1\\   1\\
 \end{pmatrix}
=
 \begin{pmatrix}
   0\\   0\\   1\\   1\\   0\\   1\\   0\\   0\\
 \end{pmatrix}
= f_{L1}
\]
The syntax $\otimes$ denotes the
matrix-vector multiplication operation, $\odot$ denotes element-wise
multiplication, $(a_1, a_2, \cdots, a_n)^T \odot (b_1, b_2, \cdots, b_n)^T =
(a_1b_1, a_2b_2, \cdots, a_nb_n)^T$, and overline represents the complement
operation. In other words, $\overline{v_i} = 0$ for $v_i \neq 0$ and
$\overline{v_i} = 1$ for $v_i = 0$.

In Figure~\ref{fig:bfs-example-all}, BFS
starts from vertex $v_0$, thus $f_{L0}=\{v_0\},
f_{L1}=\{v_2,v_3,v_5\},f_{L2}=\{v_4,v_6,v_7\},f_{L3}=\{v_1\}$. If we use a
vector of size $n$ to represent the corresponding frontier $f_{Lk}$, for
example, $f_{L2}=\{0,0,1,1,0,1,0,0\}$. This algorithm
becomes deterministic with the use of (select, max)-semiring, because the
parent is always chose to be the vertex with the highest label. 

\subsection{Baseline BFS}
\label{sec:baseline-bfs-alg}

\begin{algorithm} [t] 
\label{alg:spmv-bfs}
\SetKwInOut{Input}{Input}
\SetKwInOut{Output}{Output}
\Input{s: source vertex id}
\caption{A baseline distributed BFS}
$f(s) \gets s$\;
\ForEach{processor $P_i$ in parallel}
{
  \While{$f \neq \emptyset$}
  {
    $t_{i} \gets A_{i} \otimes f $\;
    $t_{i} \gets t_{i} \odot \overline{\pi_{i}}$; $\pi_{i} \gets \pi_{i} + t_{i}$\;
    $f_{i} \gets t_{i}$\;
    $f \gets \textsc{Allgatherv}(f_i,P_i)$\;
  }
}
\end{algorithm}
Algorithm \ref{alg:spmv-bfs} describes the baseline BFS. Each loop block
(starting in line 3) performs a single level traversal. $f$ represents the
current frontier, which is initialized as an empty bitmap; $t$ is an bitmap
that holds the temporary parent information for that iteration only; $\pi$ is
the visited information of previous frontiers. The computational step (line
4,5,6) can be efficiently parallelized with multithreading. For SpMV operation
in line 4, the matrix data is naturally splitted into pieces for
multithreading. At the end of each loop, \textsc{Allgather} updates $f$ with
MPI collective communication.

\section{BFS with Compression}
\label{sec:spmv-bfs}

For large graphs, the communication time of distributed BFS algorithms can
take as much as seventy percent of the total execution time. To reduce it, we
need to reduce the size of the messages. One simple way is to use lossless
compression, trading computation for bandwidth. 


\begin{algorithm} [t] 
\label{alg:bfs-compress}
\caption{Distributed BFS with compression.}
$f(s) \gets s$\;
\ForEach{processor $P_i$ in parallel}
{
  \While{$f \neq \emptyset$}
  {
    $t_{i} \gets A_{i} \otimes f $\;
    $t_{i} \gets t_{i} \odot \overline{\pi_{i}}$\;
    $\pi_{i} \gets \pi_{i} + t_{i}$; $f_{i} \gets t_{i}$\;
    $f_{i}' \gets Compress(f_{i})$\;
    $f' \gets \textsc{Allgatherv}(f_i',P_i)$\;
    $f \gets Uncompress(f')$\;
  }
}
\end{algorithm}

Algorithm~\ref{alg:bfs-compress} describe the distributed BFS with
compression. The difference between Algorithm~\ref{alg:bfs-compress} and
Algorithm~\ref{alg:spmv-bfs} are line 7 and 9. At line 7 the frontier vector
$f$ is first compressed into $f'$ before communication. At line 9 $f'$ is
uncompressed back to $f$ after communication.

\begin{table}[t]
  \caption{A WAH compressed bitmap.}
  \label{table:wah-example}
  \begin{center}
    \begin{tabular}{ll}
      \hline
      16 bits & 1000000000000000 \\
      3-bit groups & 100 000 000 000 000 0\\
      WAH & 0100 1100 0000\\
      \hline
    \end{tabular}
  \end{center}
\end{table}
We use word-aligned hybrid (WAH)~\cite{Wu:WAH} for \textit{Compress} and
\textit{Uncompress} function, as WAH is fast and well suited for bitmap
compression. Table~\ref{table:wah-example} shows the WAH compressed
representation of 16 bits. In WAH, there are three types of words: literal
words, fill words and active words. The most significant bit of a word is used
to distinguish between a literal word (0) and a fill word (1). And a active
word stores the last few bits. We assume that each computer word contains 4
bits and all fill bits are 0 in this example. Under this assumption, each
literal word stores 3 bits from the bitmap, and each fill word represents a
multiple of 3 bits. The second line in Table~\ref{table:wah-example} shows the
bitmap as 3-bit groups. The last line shows the WAH words. The first two words
are regular words, the first is a literal word, and the second a fill
word. The fill word $1100$ indicates a 0-fill of 4 words long (containing 12
consecutive 0 bits). Note that the fill word stores the fill length as 4
rather than 12. The third word is the active word; it stores the last few bits
that could not be stored in a regular word. For sparse bitmaps, where most of
the bits are 0, a WAH compressed bitmap would consist of pairs of a fill word
and a literal word~\cite{Wu:WAH}.

Other lossless compression methods include run-length encoding, huffman
coding, LZ77~\cite{lz77}, or more dedicated bitmap compression method such as
byte-aligned bitmap compression (BBC)~\cite{Antoshenkov:1995:BBC} and position
list word aligned hybrid (PLWAH)~\cite{Deliege:2010:PLWAH}. There is a
space-time tradeoff among these compression schemes. Comparing to WAH, LZ77 is
slower but has a better compression ratio. The benefit of compression will
depend on many factors such as compression ratio, sparsity of the messages,
compression speed and network bandwidth. The best compression scheme can not
be determined beforehand, so we use experiment to analyse these
tradeoffs. Details will be presented in Section~\ref{sec:exp}.

\begin{figure}[t]
  \centering
  \includegraphics[width=0.5\textwidth]{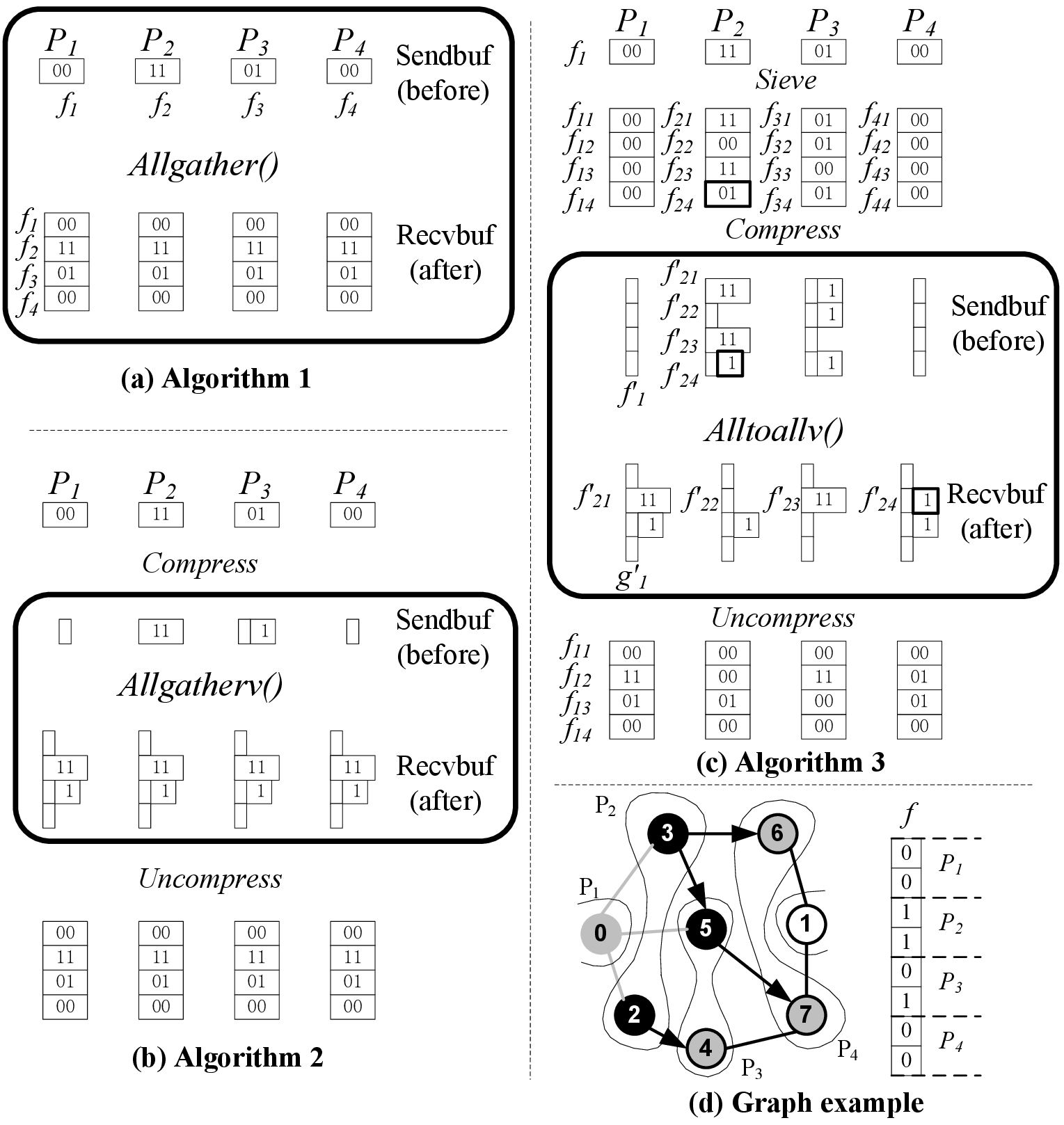}
  \caption{Three different ways of communication.}
  \label{fig:compress}
\end{figure}

\section{BFS with Compression and Sieve}
\label{sec:dir-bfs}

The message size in the communication is reduced after compression. But there
is still room for improvement. To achieve a better compression ratio, we can
use a directory to sieve the bitmap, making it even sparser for compression.

In this section we propose a distributed directory, \textit{cross directory},
as a sieve to reduce the number of messages sent to each processor. We will
first introduce the data structure of cross directory in
subsection~\ref{sec:cross-dir}, then describe our BFS algorithm with
compression and sieve in subsection~\ref{sec:dir-bfs-alg}.

\subsection{Cross Directory}
\label{sec:cross-dir}

The problem of collective communication like \textsc{Allgatherv} is that it
sends all frontier vertices to all the processors --- just like snoopy cache
coherence algorithms, all updates are visible to all processors --- regardless
whether a vertex is meaningful to each processor. Take a look at
Figure~\ref{fig:compress} (a), after the \textsc{Allgather} communication,
each processor actually get all the frontier vectors. In fact, \emph{each
  processor needs only a small fraction of the frontier, and this fraction can
  be determined before communication}. For example, in
Figure~\ref{fig:compress} (d), $P_4$ only needs $v_3$ from $P_2$. This means
$P_2$ does not need to send the information of $v_2$ to $P_4$, because $v_2$
does not has a direct edge connecting to the vertices of $P_4$.

To explain this in algebra, we first partition the matrix $A$ into $p$
block-rows. Then partition each block $A_i$ into $p$ sub-blocks.
\begin{equation}
  \label{eq:abc}
  A \otimes f =
 \begin{pmatrix}   A_1 \\   A_2 \\   \vdots \\   A_p \end{pmatrix}
 \begin{pmatrix}   f_1 \\   f_2 \\   \vdots \\   f_p \end{pmatrix}
\end{equation}
\begin{equation}
  \label{eq:sub-block}
  A_i = \begin{pmatrix} A_{i,1} & A_{i,2} & \cdots & A_{i,p} \end{pmatrix}
\end{equation}
To calculate $f_4 = \sum_{i=1}^{4} A_{4,i} \otimes f_i$, 
\begin{equation}
  \label{eq:a4}
  A_{4,2} \otimes f_2 = \begin{pmatrix} 01 \\ 00 \end{pmatrix} 
  \begin{pmatrix} x_1 \\ x_2 \end{pmatrix} = \begin{pmatrix} y_1 \\ y_2 \end{pmatrix},
\end{equation}
because $a_{0,0}$ and $a_{1,0}$ of $A_{4,2}$ are always zero (denote
$A_{i,j}=[a_{i,j}]_{m \times n}$), $y_1$ will always be zero. So $P_2$ does
not need to send $x_1$ to $P_4$. We define a data structure to record this
information and use it to sieve communication messages.

We formally define \textit{directory vector} as follows: for each item $v_k$
in vector $V_{i,j}$, $v_k$ is set to one if column $k$ in $A_{i,j}$ contains
at least one non-zero.
\begin{align}
  \label{eq:dir-vec}
V_{i,j} &= (v_1,v_2,\cdots,v_n) \\
& where \ v_k =
  \begin{cases}
1, & \exists a_{i,k}=1, i \in [1,m], k \in [1,n] \\
0, & otherwise
  \end{cases}\nonumber
\end{align}
For the above example, $V_{4,2}=(0,1)$ is sent to $P_2$ from $P_4$ during
initialization. When traversing begins, $f_2$ is sieved into $f_{2,4}=f_2 \odot
V_{4,2}=(1,1)^T \odot (0,1)^T = (0,1)^T$, so we only send one vertex (in
compressed bitmap format) back instead of two. This ``sieve effect'' is where
communication is reduced.

\begin{figure}[t]
\centering
  \includegraphics[width=0.3\textwidth]{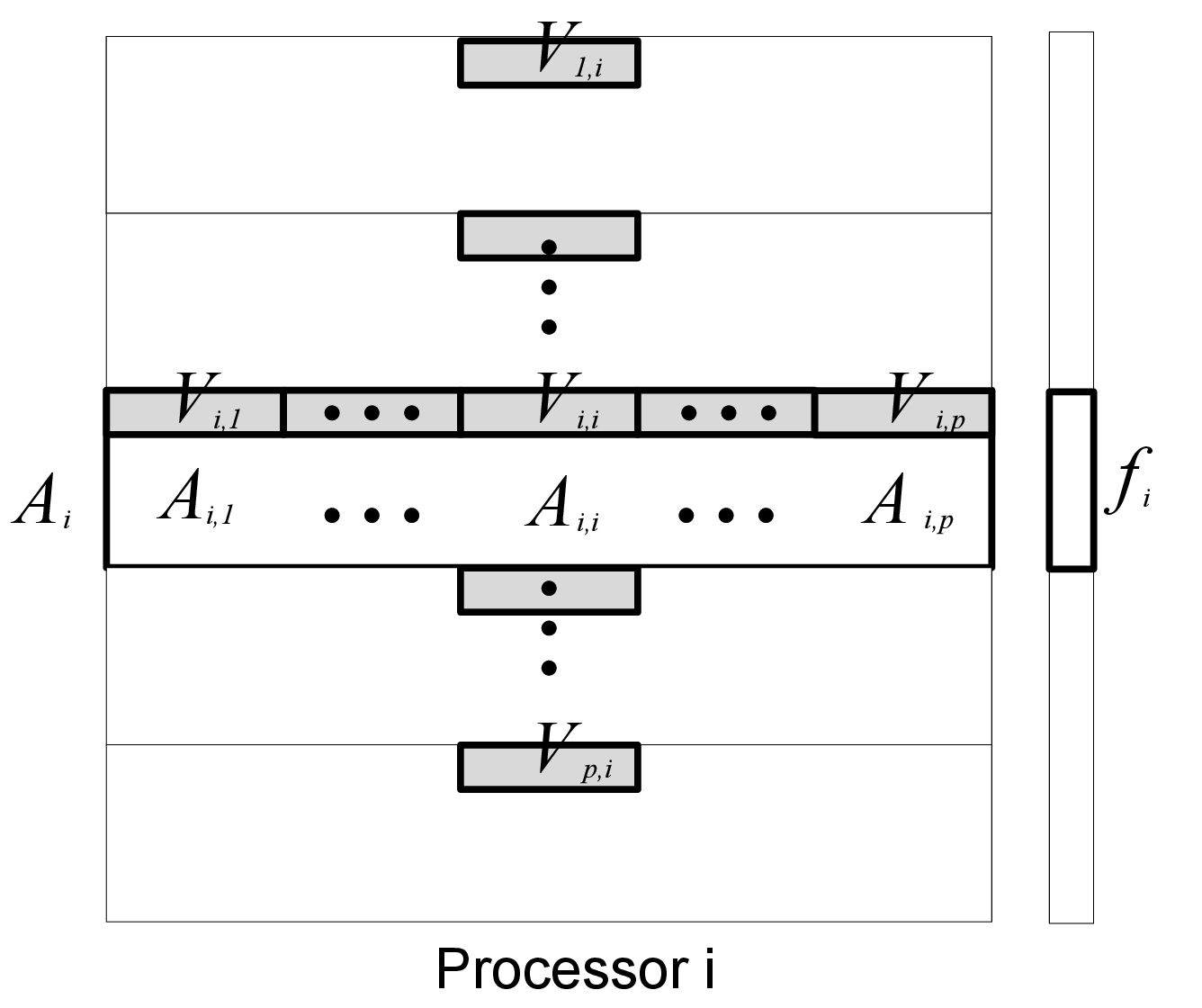}
  \caption{The \textit{cross directory} data structure for processor $i$.}
  \label{fig:dir-ds}
\end{figure}

And the \textit{cross directory} of processor $P_i$ is defined as:
\begin{equation}
  \label{eq:cross-dir}
  \mathbb{C}_i = \{ V_{x,i} \ or\  V_{i,x} \mid x=1,2,\cdots,p \}
\end{equation}
Besides a row of directory vectors $V_i=\{V_{i,y} \mid y=1,2,\cdots,p\}$,
$P_i$ own a copy of the directory vectors $\{V_{x,i} \mid x=1,2,\cdots,p\}$ in
column $i$. The directory in the column direction is established during
initialization and used to provide a local lookup for sieving (See
Figure~\ref{fig:dir-ds}).

\begin{figure}[t]
  \centering
  \includegraphics[width=0.45\textwidth]{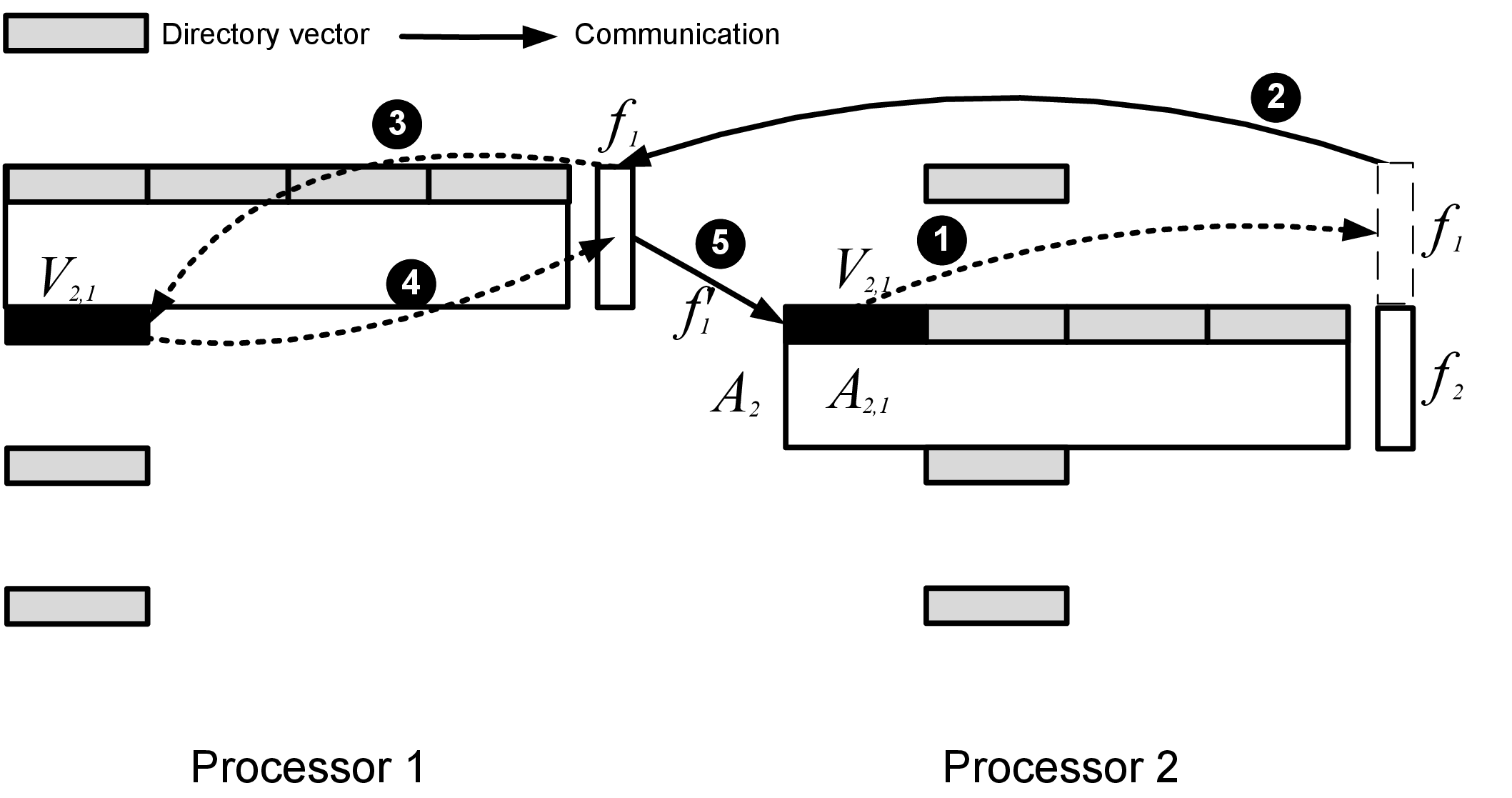}
  \caption{Communications in Directory-based BFS algorithm. Example of
    multiply $A_{2,1}$ with $f_1$ in five steps.}
  \label{fig:cross-directory}
\end{figure}

Figure~\ref{fig:cross-directory} illustrates an example of
communication with cross directory. The matrix is row-block partitioned among
four processors.
$A_{2,1} \otimes f_1$ is done in five steps, $A_{2,1}$ need to get $f_1$ (step
1), $P_2$ then send a request message to $P_1$ (step 2), $P_1$ check its local
copy of $V_{2,1}$ (step 3) and sieve $f_1$ with the non-zero positions (step
4), then $P_1$ send back a sieved $f_1'$ (step 5). The sieved vector is
very sparse and can be represented as sparse vector, reducing the
communication cost.

\subsection{Sieve with Cross Directory}
\label{sec:dir-bfs-alg}

Algorithm~\ref{alg:bfs-dir-compress} is our directory-based algorithm with
compression and sieve: based on Algorithm~\ref{alg:spmv-bfs},
Algorithm~\ref{alg:bfs-dir-compress} first sieves the frontier bitmap with the
cross directory (line 9), making it sparser; then it compresses this sieved
bitmap (line 10) and send it with \textsc{Alltoallv} (line 11); after received
the compressed bitmap, the original vector could be restored with
uncompression (line 13).

\begin{algorithm} [t] 
\label{alg:bfs-dir-compress}
\SetKwInOut{Input}{Input}
\SetKwInOut{Output}{Output}
\KwData{$f_{i}' = \{f_{i,1}',f_{i,2}',\cdots,f_{i,n-1}'\}$:send buffer;
$g_{i}' = \{g_{i,1}',g_{i,2}',\cdots,g_{i,n-1}'\}$:receive buffer;
$\mathbb{C}_i$:cross directory for $P_i$.}
\caption{Distributed BFS with sieving and compression.}
$f(s) \gets s$\;
initialize $\mathbb{C}_i$\;
\ForEach{processor $P_i$ in parallel}
{
  \While{$f \neq \emptyset$}
  {
    $t_{i} \gets \sum_{j=1}^{n} A_{i,j} \otimes f_{i,j} $\;
    $t_{i} \gets t_{i} \odot \overline{\pi_{i}}$\;
    $\pi_{i} \gets \pi_{i} + t_{i}$; $f_{i} \gets t_{i}$\;
    \ForEach{$j \in [0,n)$ in parallel}
    {
      $f_{i,j} = f_i \odot V_{j,i}$;\tcc*[f]{sieving}\;
      $f_{i,j}' \gets Compress(f_{i,j})$\;
    }
    $g_i' \gets \textsc{Alltoallv}(f_i',P_i)$\;
    \ForEach{$j \in [0,n)$ in parallel}
    {
      $f_{i,j} \gets Uncompress(g_{i,j}')$\;
    }
  }
}
\end{algorithm}

This \textit{cross directory} is inspired by Pinar and Hendrickson's
distributed directory~\cite{Pinar:2001} and Baker et al.'s assumed partition
algorithm~\cite{Baker:2006}. In their work, the communication pattern is
dynamically determined and more general, while in our case, the communication
parties are static. So we store the directory on both side of the
communication, and update them synchronously on each side instead of send the
updated directory over the network each time.  Another difference lies in the
collective communication. In Baker et al.'s assumed partition algorithm,
point-to-point rendezvous communication is used, we find that could be
replaced with a more efficient \textsc{Alltoallv}. More generally, the cross
directory is applicable to matrix-vector multiplication when following
premises are true: 1) the partition of the matrix is static so that
communication parties are static; 2) the matrix remains unchanged and
multiplication takes many times so that cross directory could be reused and
its initialization cost could be omitted. For example, sum of the
multiplication of the same matrix with different vectors
($\sum_{i=1}^{n}Ax_i$).

\subsection{Proof of Correctness}
\label{sec:prof-correct}

In this subsection we prove the correctness of
Algorithm~\ref{alg:bfs-dir-compress} by proving its equivalence to
Algorithm~\ref{alg:spmv-bfs}.

\begin{lemma}\label{lemma-1}
  $A_{i,j} \otimes f_j = A_{i,j} \otimes f_j \odot V_{j,i} $.
\end{lemma}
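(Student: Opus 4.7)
The plan is to prove Lemma~\ref{lemma-1} by unfolding both sides entrywise and showing that the Hadamard mask zeros entries of $f_j$ only at positions where the corresponding columns of $A_{i,j}$ are identically zero, so those positions cannot contribute to the matrix--vector product.

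Concretely, I would write $A_{i,j} = [a_{r,k}]_{m \times n}$ and $f_j = (x_1, \ldots, x_n)^T$, so that the $r$-th entry of $y = A_{i,j} \otimes f_j$ is $y_r = \sum_{k=1}^{n} a_{r,k}\, x_k$. Let $v_k$ denote the $k$-th coordinate of the relevant directory vector. By the definition in Equation~\ref{eq:dir-vec}, $v_k = 1$ iff some $a_{r,k} \neq 0$, and $v_k = 0$ otherwise. I would then partition the column index set as $\{1,\ldots,n\} = K_1 \cup K_0$ with $K_1 = \{k : v_k = 1\}$ and $K_0 = \{k : v_k = 0\}$. For $k \in K_0$, every $a_{r,k}$ equals $0$, so $a_{r,k} x_k = 0 = a_{r,k}(x_k v_k)$ trivially. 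For $k \in K_1$, $v_k = 1$ forces $x_k v_k = x_k$, so the term is also unchanged. Summing over $k$ yields $y_r = \sum_k a_{r,k}(x_k v_k)$, which is precisely the $r$-th entry of $A_{i,j} \otimes (f_j \odot V)$. The same annihilation argument carries through unchanged in the $(\textrm{select}, \max)$-semiring formulation of Section~\ref{sec:preliminary}, since the additive identity still absorbs the multiplicative operator.

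The one piece of bookkeeping I expect to be the main obstacle --- more notational than mathematical --- is reconciling the two index orders that appear in the statement. The cross directory stores $V_{x,i}$ in column $i$ of $\mathbb{C}_i$, and Algorithm~\ref{alg:bfs-dir-compress} sieves $f_i$ for $P_j$ via $f_i \odot V_{j,i}$, so one must verify that $V_{j,i}$ is literally the column-nonzero pattern of the sub-block that $P_j$ later multiplies against. Once this identification is pinned down using the definition in Equation~\ref{eq:dir-vec} and the column-copy convention of Figure~\ref{fig:dir-ds}, the lemma collapses to the one-line distributional argument above, and no induction on BFS levels or appeal to graph structure is needed --- the claim is a purely linear-algebraic property of any SpMV whose input vector has been masked by the column support of the matrix.
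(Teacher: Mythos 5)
Your proof is correct, but it is not the argument the paper gives: the two of you parse the expression $A_{i,j}\otimes f_j\odot V_{j,i}$ differently, and the proofs diverge accordingly. You read it as $A_{i,j}\otimes(f_j\odot V)$ --- the mask is applied to the \emph{input} vector --- and argue column-wise: a coordinate annihilated by the mask sits under an identically zero column of $A_{i,j}$, so it contributes nothing to any output entry. The paper's own proof reads it as $(A_{i,j}\otimes f_j)\odot V_{j,i}$ --- the mask is applied to the \emph{output} --- setting $Y=(x_1v_1,\dots,x_nv_n)^T$ with $X=A_{i,j}\otimes f_j$ and showing $x_k=0$ whenever $v_k=0$ because the entire $k$-th \emph{row} of $A_{i,j}$ then vanishes. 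Both are one-line annihilation arguments, but they rest on different identifications of the directory vector: the paper's needs $V_{j,i}$ to be the row support of $A_{i,j}$ (which, for the symmetric adjacency matrix of an undirected graph, is exactly the column support of $A_{j,i}$ given by Equation~\ref{eq:dir-vec}), while yours needs the mask to be the column support of $A_{i,j}$ itself, i.e.\ $V_{i,j}$. The index-order bookkeeping you flag as the main obstacle is therefore genuinely the crux of the discrepancy, and the paper itself is inconsistent about it: the worked example ($f_{2,4}=f_2\odot V_{4,2}$ feeding the product $A_{4,2}\otimes f_2$) and line 9 of Algorithm~\ref{alg:bfs-dir-compress} both apply the sieve to the input exactly as you do, and Lemma~\ref{lemma-2} invokes Lemma~\ref{lemma-1} in precisely your input-masked form, so your version is arguably the statement the correctness argument actually needs; the paper's output-masked proof covers it only after one observes that the unmasked, input-masked, and output-masked products all coincide. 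Your extra remark that the argument survives in the $(\mathrm{select},\max)$-semiring, because the additive identity still absorbs under the multiplicative operation, is a point the paper does not bother to make. In short: correct, and closer to what the algorithm does, but a different (column-based, input-side) route from the paper's (row-based, output-side) one.
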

\begin{proof}
  Let $X=(x_1,x_2,\cdots,x_n)^T=A_{i,j} \otimes f_j$,
  $Y=(y_1,y_2,\cdots,y_n)^T=A_{i,j} \otimes f_j \odot V_{j,i}
  =(x_1v_1,x_2v_2,\cdots,x_nv_n)^T$,$V_{j,i} = (v_1,v_2,\cdots,v_n)^T$, and
  $f_j=(z_1,z_2,\cdots,z_n)^T$. Denote $A_{i,j}=[a_{i,j}]_{m \times n}$, then
  $x_k = \sum_{l=1}^{n} a_{k,l}z_l$. According to the definition of directory
  vector, if $v_k = 0 \Rightarrow \forall a_{k,i}=0, i\in[1,n] \Rightarrow x_k
  = \sum_{l=1}^{n} a_{k,l}z_{l} = 0$, so $y_k = x_kv_k = 0 = x_k$; if $v_k = 1
  \Rightarrow x_k = x_kv_k = y_k$. Thus $X = Y$.
\end{proof}

\begin{lemma} \label{lemma-2} $t_{i} = \sum_{j=1}^{n} A_{i,j} \otimes f_{i,j}
  $ in Algorithm~\ref{alg:bfs-dir-compress} (line 5) is equivalent to $t_{i} =
  A_i \otimes f$ in Algorithm~\ref{alg:spmv-bfs} (line 4).
\end{lemma}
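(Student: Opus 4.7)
The plan is to unfold both sides using the block structure of $A$ and $f$ and then match term by term. By the block-row partitioning introduced earlier, $A_i \otimes f$ already takes the form $\sum_{j=1}^{n} A_{i,j} \otimes f_j$, which has exactly the same shape as the sum on line~5 of Algorithm~\ref{alg:bfs-dir-compress}. It therefore suffices to prove the single identity $A_{i,j} \otimes f_{i,j} = A_{i,j} \otimes f_j$ for every $j$ and then sum over $j$.

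First I would pin down what the symbol $f_{i,j}$ actually refers to when line~5 is executed. Tracing the algorithm backwards, the $f_{i,j}$ read there was produced on line~13 of the previous iteration by uncompressing the receive buffer $g_{i,j}'$, which the \textsc{Alltoallv} on line~11 filled with the message $P_j$ sent to $P_i$, namely $\mathit{Compress}(f_j \odot V_{i,j})$ (line~9 applied with sender and receiver roles swapped, so that the directory vector used is $V_{i,j}$, the one matching the block $A_{i,j}$ that $P_i$ will later multiply against). Since \emph{Compress}/\emph{Uncompress} is lossless, we obtain $f_{i,j} = f_j \odot V_{i,j}$ at line~5. Plugging this in, the termwise claim becomes $A_{i,j} \otimes (f_j \odot V_{i,j}) = A_{i,j} \otimes f_j$, the input-side mirror of Lemma~\ref{lemma-1}, which follows by the same one-line reasoning: whenever the $k$-th entry of $V_{i,j}$ is zero, the $k$-th column of $A_{i,j}$ is entirely zero by definition, so the $k$-th entry of $f_j$ makes no contribution to $A_{i,j} \otimes f_j$, and masking it leaves the product unchanged. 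Summing over $j$ gives $\sum_j A_{i,j} \otimes f_{i,j} = \sum_j A_{i,j} \otimes f_j = A_i \otimes f$, which is precisely the line~4 value in Algorithm~\ref{alg:spmv-bfs}.

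The hard part is not algebra but bookkeeping. The symbol $f_{i,j}$ is overloaded inside Algorithm~\ref{alg:bfs-dir-compress}: on line~9 it denotes the \emph{outgoing} sieved frontier $f_i \odot V_{j,i}$, whereas on line~13 (and therefore at line~5 of the next iteration) it denotes the \emph{incoming} sieved frontier $f_j \odot V_{i,j}$. Confirming that it is the latter that feeds line~5, and that the directory copies $V_{i,j}$ held at $P_j$ and $V_{j,i}$ held at $P_i$ pictured in Figure~\ref{fig:dir-ds} are the ones applied in the correct places, is the only genuinely subtle step; once that is fixed the proof collapses to the single column-zero observation above.
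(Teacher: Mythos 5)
Your proof is correct, and at the top level it follows the same route as the paper's: unfold $A_i \otimes f$ into $\sum_j A_{i,j}\otimes f_j$ by the block partition and reduce the claim to a termwise statement that sieving with a directory vector leaves each product $A_{i,j}\otimes f_j$ unchanged. The difference is in \emph{which} termwise identity carries the argument. The paper routes through Lemma~\ref{lemma-1}, which --- as its own proof makes explicit by setting $Y=(x_1v_1,\dots,x_nv_n)^T$ with $X=A_{i,j}\otimes f_j$ --- is an \emph{output}-masking statement, $(A_{i,j}\otimes f_j)\odot V_{j,i}=A_{i,j}\otimes f_j$, whose proof implicitly uses the symmetry of $A$ to turn ``column $k$ of $A_{j,i}$ is zero'' into ``row $k$ of $A_{i,j}$ is zero''; the paper's proof of Lemma~\ref{lemma-2} then starts from $f_{i,j}=f_i\odot V_{j,i}$ (the line-9 \emph{outgoing} buffer) but writes the result as $A_{i,j}\otimes f_j\odot V_{j,i}$, silently replacing $f_i$ by $f_j$ and placing the mask where Lemma~\ref{lemma-1} can be invoked. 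You instead trace the data flow through \textsc{Alltoallv} and the lossless compress/uncompress pair to establish that the $f_{i,j}$ consumed on line~5 is the \emph{received} vector $f_j\odot V_{i,j}$, and then prove the \emph{input}-masking identity $A_{i,j}\otimes(f_j\odot V_{i,j})=A_{i,j}\otimes f_j$ directly from the column-zero definition of $V_{i,j}$, with no appeal to symmetry. This is the identity that actually matches what Algorithm~\ref{alg:bfs-dir-compress} computes, and your explicit handling of the overloaded symbol $f_{i,j}$ (outgoing on line~9 versus incoming on line~13) and of which directory copy applies where is the bookkeeping the paper's proof elides. What the paper's version buys is reuse of an already-stated lemma; what yours buys is an argument that is self-contained and immune to the index slip in the published proof.
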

\begin{proof}
  In Algorithm~\ref{alg:bfs-dir-compress}, for $\forall j \in [1,n], f_{i,j} =
  f_i \odot V_{j,i}$, according to Lemma~\ref{lemma-1}, $\sum_{j=1}^{n}
  A_{i,j} \otimes f_{i,j} = \sum_{j=1}^{n}A_{i,j} \otimes f_j \odot
  V_{j,i}=\sum_{j=1}^{n} A_{i,j} \otimes f_j = A_i \otimes f = t_i$.
\end{proof}

\section{Algorithm Analysis}
\label{sec:ana}
In this subsection we'd like to analyse the communication and space cost of
the three algorithms in this paper.

\subsection{Communication Cost}
\label{sec:model}
We study the parallel BFS problem in the message passing model of distributed
computing: every processor has its own local memory, and data exchange between
processors are done by message passing. The time taken to send a message
between any two processors can be modeled as $T(n)=\alpha + n\beta$, where
$\alpha$ is the latency (or startup time) per message, independent of message
size, $\beta$ is the transfer time per byte (inverse of bandwidth), and $n$ is
the number of bytes transfered~\cite{Kumar:2002}. This time cost model is
generally used to model data movement either between levels of a memory
hierarchy or over a network connecting processors. In this paper, we focus on
the latter case. To simplify the analysis, we assume bandwidth cost is much
bigger than latency cost ($n\beta \gg \alpha$), --- as the dataset of
distributed BFS is big, --- therefore $T(n)$ will be dominated by the
bandwidth cost $n\beta$. For a given network, $\beta$ is constant, so the
communication cost is in direct proportion to the message size $n$.
Let \textit{communication volume of a processor} $\mathcal{V}_i$ be the size
of all messages communicated on processor $P_i$ in an algorithm. The
\textit{communication volume of an algorithm} is defined as $\mathcal{V}=max
\{ \mathcal{V}_i \mid i \in [1,p] \}$.

The communication volume of MPI collective communication is derived
from~\cite{Thakur:03,Pjesivac-Grbovic:2007}: For $p$ processors, when each
processor needs to broadcast $n/p$ size of message to others, the
communication volume of both allgather and alltoall are $O(n)$.  There are
many algorithms for allgather, for example, ring and recursive
doubling~\cite{Thakur:03}. The time taken for these two algorithm is
$T_{ring}=(p-1)\alpha + {p-1 \over p} n \beta$ and $T_{rec\_dbl} = \log {p}
\alpha + {p-1 \over p} n \beta$, respectively. No matter what algorithm is
used, the bandwidth cost is the same ${p-1 \over p} n \beta$. In
data-intensive applications like BFS, we assume bandwidth cost is much bigger
than latency cost, so its communication volume is bound to $O(n)$. The
communication volume of alltoall can be done in the same
manner~\cite{Pjesivac-Grbovic:2007}.

For graph $G(V,E)$, let $m=|E|$, $n=|V|$, let $d$ be the diameter of the
graph. At each level of BFS, the communication volume of allgather
(Algorithm~\ref{alg:spmv-bfs}, line 7) is $O(n)$; the algorithm will finish at
level $d$. So the communication volume of Algorithm~\ref{alg:spmv-bfs} is $d
\times O(n)$.

For Algorithm~\ref{alg:bfs-compress}, let $C_i(C_i>1)$ be the compression
ratio of the \textit{Compression} function of Algorithm~\ref{alg:bfs-compress}
(line 7) at level $i$, let $C=\frac{1}{d}\sum_{i=1}^{d} \frac{1}{C_i}(C<1)$ be
the compression ratio factor.  The communication volume of
Algorithm~\ref{alg:bfs-compress} is $Cd \times O(n)$.

For Algorithm~\ref{alg:bfs-dir-compress}, let $p$ be the number of the
processors, $e=m/n$ be the average degree of a vertex, and $C'$ be the
compression ratio factor of Algorithm~\ref{alg:bfs-dir-compress}. The
communication volume of Algorithm~\ref{alg:bfs-dir-compress} is $C'd \times
O(n)$. After sieve, a vertex is sent to at most $min(e,p)$ processors in
Algorithm~\ref{alg:bfs-dir-compress} instead of $p$ in
Algorithm~\ref{alg:spmv-bfs} and \ref{alg:bfs-compress}.  Thus
Algorithm~\ref{alg:bfs-dir-compress}'s messages will contain less nonzeros
than Algorithm~\ref{alg:bfs-compress}'s, which leads to a higher compression
ratio and a smaller $C'(C' < C)$.

\subsection{Memory Consumption}

For Algorithm~\ref{alg:spmv-bfs}, the memory consumption of $f$ is $O(n)$;
$t_i$ and $\pi_i$ are $O(n/p)$. So the memory consumption of each processor of
Algorithm~\ref{alg:spmv-bfs} is $O(n)$.

Compared to Algorithm~\ref{alg:spmv-bfs}, Algorithm~\ref{alg:bfs-compress}
replace $f$ with $f'$, the memory consumption of which is at most as
that of $f$, $O(n)$.  So the memory consumption of each processor of
Algorithm~\ref{alg:bfs-compress} is also $O(n)$.

Compared to Algorithm~\ref{alg:bfs-compress},
Algorithm~\ref{alg:bfs-dir-compress} added $V_i$, which costs $O(n)$ memory.
So the memory consumption of Algorithm~\ref{alg:bfs-dir-compress} is also
bound to $O(n)$.

\section{Experimental Results}
\label{sec:exp}

This section presents experimental results for the distributed BFS. 

\subsection{Experiment Setup}

Our performance results is collected on a 512-node multi-core cluster system,
connected by Infiniband of 40 Gb/s. Each node has an SMP architecture with two
Xeon X5650 CPUs (Westmere), which are connected through Intel QuickPath
Interconnect (QPI) of 6.4 GT/s. The Xeon X5650 has six cores, each supports
simultaneous multithreading (SMT) up to two threads. Each node has 24GB
DDR3-1333 RAM. In our experiments we used up to 512 node, or 6,144 cores, to
run the experiment. We use gcc 4.3.4 and MPICH2 1.4.1 to compile our
algorithms. The GNU OpenMP library is used for intra-node threading. See
Table~\ref{table:platform}.

\begin{table}[t]
  \caption{Experiment Platform}
  \label{table:platform}
  \begin{center}
    \begin{tabular}{lc}
      \toprule%
      \textbf{System} &  SMP Cluster\\\midrule
      Number of Nodes & 512 \\
      Number of CPUs / node & 2 \\\bottomrule
      \textbf{Processor} & Intel X5650 \\\midrule
      Number of cores & 6 \\
      Number of threads & 12 \\
      Core frequency & 2.66 GHz \\
      L1 cache size &  384 KB \\
      L2 cache size & 1536 KB \\
      L3 cache size & 12 MB \\
      Memory type & DDR3-1333 \\
      QPI Speed & 6.4 GT/s \\\bottomrule
      \textbf{Interconnect} & Infiniband \\\midrule
      Rate & 40 Gb/sec (4X QDR) \\\bottomrule
    \end{tabular}
  \end{center}
\end{table}

Our algorithms are based on Graph 500 benchmark. Input datasets are generated
use synthetic kronecker graphs \cite{kronecker-graph} which follow power law
distributions: heavy tails for the degree distribution; small diameters; and
densification and shrinking diameters over time. That means most of vertices
has a small number of neighboring vertices and the graph is sparse.  The graph
size is determined by two parameters: ``Scale'' and ``Edge factor'', where the
total number of vertices $N$ equals $2^{Scale}$, and the number of edges, $M =
edgefactor * N$. The default edgefactor is set 16. In order to save space, an
adjacent array (or list) representing sparse graph is transformed into
compressed sparse row (CSR) or column (CSC). We focus on the CSR-based BFS
implementation in Graph 500. In order to compare the performance of Graph 500
implementations across a variety of architectures, a new performance metric is
adopted in Graph 500. Let $time$ be the measured execution time for running
BFS. Let $m$ be the number of input edge tuples within the component traversed
by the search, counting any multiple edges and self-loops. The normalized
performance rate \textit{traversed edges per second (TEPS)} is defined as:
$TEPS = m / time$.

\begin{table}[t]
  \caption{Different BFS algorithms tested}
  \label{table:diff-bfs-tested}
  \begin{center}
    \begin{tabularx}{0.48\textwidth}{ll}
      \toprule%
      Name & Algorithm Details\\\midrule
      \textit{BIT} & Baseline BFS with bitmap (Algorithm~\ref{alg:spmv-bfs}) \\
      \textit{WAH} & BFS with WAH compression (Algorithm~\ref{alg:bfs-compress})\\
      \textit{DIR-WAH} & Directory-based BFS with WAH compression (Algorithm~\ref{alg:bfs-dir-compress})\\
      \bottomrule
    \end{tabularx}
  \end{center}
\end{table}
Table~\ref{table:diff-bfs-tested} lists different BFS algorithms tested in our
experiment.

\subsection{Experiment Results}
\label{sec:exp-res}




\begin{figure}[t]
  \centering
  \includegraphics[width=0.45\textwidth]{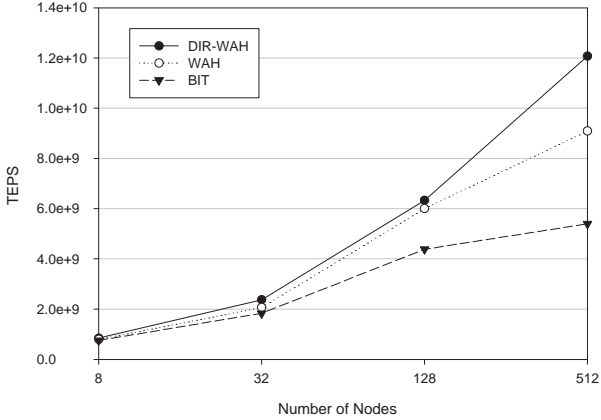}
  \caption{Weak scaling performance of different BFS algorithms. The
    experiment use fixed problem size per node (each node has about 16M
    vertices).}
  \label{fig:weak-scaling}
\end{figure}
Figure~\ref{fig:weak-scaling} shows the weak scaling performance of our BFS
algorithms. We run this experiment on our 512-node SMP cluster, with one
process per SMP node. For intra-node threading, we use the GNU OpenMP library.
Algorithm~\ref{alg:bfs-dir-compress} (\textit{DIR-WAH}) outperforms all other
algorithms and have the best scalability.  \textit{DIR-WAH} achieves 1.21E+10
TEPS at scale 33 with 512 nodes, $1.33\times$ than
Algorithm~\ref{alg:bfs-compress} (\textit{WAH}), and $2.24\times$ faster than
Algorithm~\ref{alg:spmv-bfs} (\textit{BIT}). We can see the benefits of
compression and sieve here: with compression, \textit{WAH} is $1.69\times$
faster than \textit{BIT}; with sieve, \textit{DIR-WAH} is another $1.33\times$
than \textit{WAH}. The performance gap between \textit{DIR-WAH} and
\textit{BIT} becomes wider as the number of nodes increases. This is because
the larger the number of nodes is, the more distributed BFS algorithm will
depend communication, and the more benefits compression and sieve will
bring. We will see the time breakdown in the next figure.

\begin{figure}[t]
  \centering
  \includegraphics[width=0.45\textwidth]{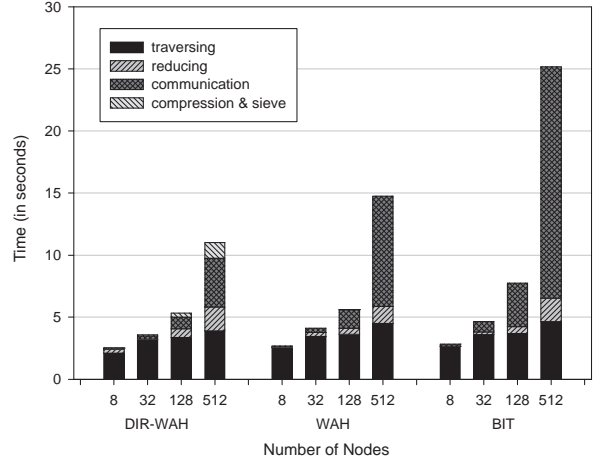}
  \caption{Time breakdown of different BFS algorithms.}
  \label{fig:time-profiling}
\end{figure}
\begin{table}[t]
  \label{table:time-profiling}
  \caption{Description of profiled time in Figure~\ref{fig:time-profiling}.}
  \begin{center}
    \begin{tabular}{ll}
      \toprule%
      Label & Description \\ 
\midrule
      traversing & Local sparse matrix vector multiplication \\
      reducing & MPI reduction to get vertex sum of current frontier\\
      communication & Time spent on communication \\
      compression \& sieve & Time spent on compression and sieve \\
\bottomrule
    \end{tabular}
  \end{center}
\end{table}
Figure~\ref{fig:time-profiling} is the time breakdown of the algorithms in
Figure~\ref{fig:weak-scaling}: ``traversing'' time is the time spent on local
computing; ``reducing'' time is the time spent on a MPI reduction operation to
get the total vertex count of the frontier; ``communicatoin'' time is the time
spent on communication; ``compression \& sieve'' time is the time spent on
compression and sieve.  For all three algorithms, as the number of nodes
increases, ``communication'' times increase exponentially.  For \textit{BIT},
it accounts for as much as $73.2\%$ of the total time for 512 node.  The
``reducing'' times also increases because the imbalance of a graph become more
severe as the graph becomes larger; the local ``traversing'' times remain more
or less the same because the problem size per node is fixed. At 512 node,
\textit{WAH} reduces the ``communication'' time by $52.4\%$ compared to
\textit{BIT}; \textit{DIR-WAH} reduces the ``communication'' time by another
$55.9\%$ compared to \textit{WAH}, achieving a total $79.0\%$ reduction
compared to \textit{BIT}, from 18.6 seconds to 3.9 seconds. On one hand, the
``compression \& sieve'' time of \textit{WAH} (only compression time is
counted for \textit{WAH}) at 512 nodes is less than $0.1\%$ of the total run
time and not shown in the figure. This means the benefit of compression is at
very little cost. On the other hand, the time of ``compression \& sieve'' in
\textit{DIR-WAH}, --- the computing time traded for bandwidth --- accounts for
$11.1\%$ of the total.  This is because Algorithm~\ref{alg:bfs-dir-compress}
(line 9) needs to copy the frontier for each process before sieve. This
copying time is expensive because it is in direct proportion to the number of
processes. Overall, comparing \textit{DIR-WAH} to \textit{WAH} (512 nodes),
sieve costs about 1.3 seconds but saves 5.0 seconds in communication --- the
saving is worth the cost.

\begin{figure}[t]
  \centering
  \includegraphics[width=0.45\textwidth]{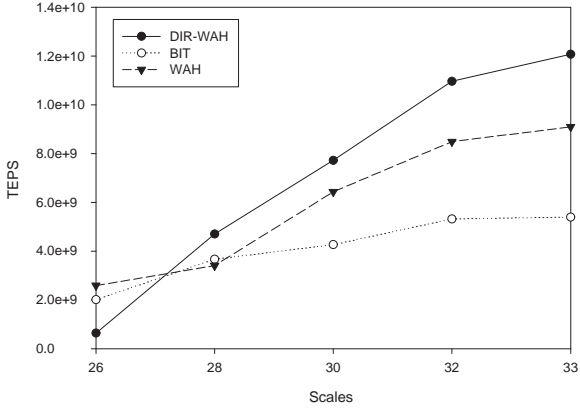}
  \caption{Performance of different BFS algorithms at different scales. The
    experiment runs on 512 nodes.}
  \label{fig:diff-scales}
\end{figure}
Figure~\ref{fig:diff-scales} plots the performance of different BFS algorithms
at different scales. The experiment runs on 512 nodes. We can learn from this
plot that the compression and sieve method favours larger messages. The size
of messages will affect the results: at scale 26, \textit{DIR-WAH},
\textit{WAH} and \textit{BIT} need to exchange 8MB bitmap globally using MPI
collective communications; at scale 33, 1GB. \textit{DIR-WAH} is the slowest
when the scale is small, but it gradually catches up and surpasses all other
algorithms when scale gets bigger. 

\begin{figure}[t]
  \centering
  \includegraphics[width=0.45\textwidth]{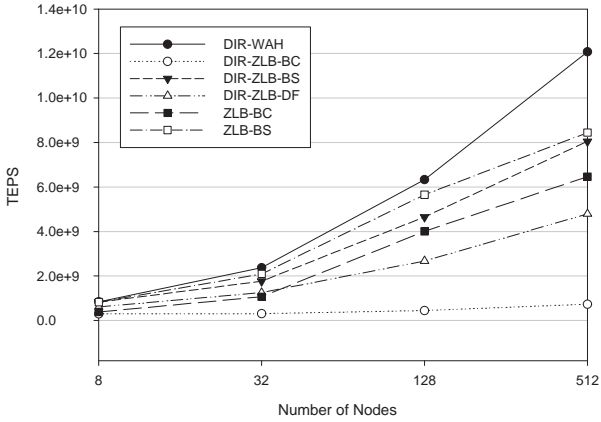}
  \caption{Weak scaling performance result of BFS of different compression
    methods. The experiment use fixed problem size per node (each node has
    about 16M vertices).}
  \label{fig:zlib-weak-scaling}
\end{figure}

\begin{figure}[t]
  \centering
  \includegraphics[width=0.45\textwidth]{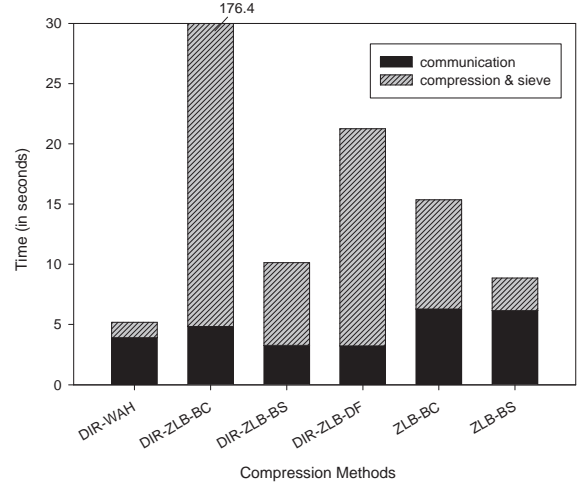}
  \caption{Time profiling of different compression implementations.}
  \label{fig:zlib-time-profiling}
\end{figure}

As mentioned in section~\ref{sec:spmv-bfs}, different methods could be used
for compression. We did not implement all of them but choose two, Zlib
library~\cite{zlib-homepage} and WAH, based on following reasons: Zlib library
is famous for good compression on a wide variety of data and provides
different compression levels; WAH is dedicated to bitmap compression, simpler
than PLWAH and faster than BBC. We use Zlib 1.2.6, and three different
compression levels: best compression (\textit{ZLB-BC}), best speed
(\textit{ZLB-BS}) and default (\textit{ZLB-DF}). The results are plotted in
Figure~\ref{fig:zlib-weak-scaling} and Figure~\ref{fig:zlib-time-profiling}.

Figure~\ref{fig:zlib-weak-scaling} shows the weak scaling performance of BFS
algorithms with different compression and sieve methods. BFS with Zlib best
compression \textit{ZLB-BC} is the slowest. With 512 nodes, \textit{DIR-WAH}
provides the best performance, followed by \textit{ZLB-BS} (69.9\% of
\textit{DIR-WAH}), \textit{DIR-ZLB-BS} (66.7\%), \textit{ZLB-BC} (53.5\%), and
\textit{DIR-ZLB-DF} (39.7\%) respectivelly.

Figure~\ref{fig:zlib-time-profiling} shows the time breakdown of these
algorithms. At scale 33 with 512 nodes, \textit{DIR-ZLB-DF}'s
``communication'' time is the smallest, $0.82\times$ of \textit{DIR-WAH},
followed by \textit{DIR-ZLB-BS} ($0.83\times$), \textit{DIR-ZLB-BC}
($1.23\times$), \textit{ZLB-BS} ($1.57\times$) and \textit{ZLB-BC}
($1.61\times$). Although \textit{DIR-ZLB-DF} and \textit{DIR-ZLB-BS}'s
communication times are less than \textit{DIR-WAH}, their ``compression and
sieve'' times are $14.25\times$ and $5.44\times$ of \textit{DIR-WAH}. So the
overall performance of \textit{DIR-ZLB-DF} and \textit{DIR-ZLB-BS} are worse
than \textit{DIR-WAH}. For all three compression levels in Zlib we tested,
default method, not the best compression method, provides the best compression
ratio. In fact, the Zlib best compression method is not suited for bitmap
compression: it is not only the slowest, but also provides the worst
compression ratio.

\section{Related Works}
\label{sec:related}



Several different approaches are proposed to reduce the communication in
distributed BFS.  Yoo et al.~\cite{Yoo:2005} run distributed BFS on IBM
BlueGene/L with 32,768 nodes. Its high scalability is achieved through a set
of memory and communication optimizations, including a two-dimensional
partitioning of the graph to reduce communication overhead. Bulu\c{c} and
Madduri~\cite{Buluc:2011} improved Yoo et al.'s work by adding hybrid
MPI/OpenMP programming to optimize computation on state-of-the-art multicore
processors, and managed to run distributed BFS on a 40,000-core machine. 
The method of two-dimensional partitioning reduces the number of processes
involved in collective communications. Our algorithm reduces the communication
overhead in a different way: minimizing the size of messages with compression
and sieve. Moreover, these two optimizations could be combined together to
further reduce the communication cost in distributed BFS. A preliminary result
is presented in Section~\ref{sec:cls} to demonstrate its potential. Beamer et
al.~\cite{Beamer:EECS-2011-117} use a hybrid top-down and bottom-up approach
that dramatically reduces the number of edges examined. The sample code in
Graph 500~\cite{graph500} use bitmap (bitset array) in communication, reducing
its message size. Cong et al.~\cite{Cong:2010} applying communication
coalescing in PGAS implementation to minimize message overhead.  

Benchmarks, algorithms and runtime systems for graph algorithms have gained
much popularity in both academia and industry.  Earlier works on Cray
XMT/MTA~\cite{Bader:2006:MTA,Mizell:2009} and IBM
Cyclops-64~\cite{tan-cyclops64} prove that both massive threads and
fine-grained data synchronization improve BFS performance.  Bader and Madduri
\cite{Bader:2006:MTA} designed a fine-grained parallel BFS which utilizes the
support for hardware threading and synchronization provided by MTA-2, and
ensures that the graph traversal is load-balanced to run on thousands of
hardware threads. Mizell and Maschhoff \cite{Mizell:2009} discussed an
improvement on Cray XMT. Using massive number of threads to hide latency has
long be employed in these specialized multi-threaded machines. With the recent
progress of multi-core and SMT, this technique can be popularized to more
commodity users. Both core-level parallelism and memory-level parallelism are
exploited by Agarwal et al. \cite{Agarwal:2010} for optimized parallel BFS on
Intel Nehalem EP and EX processors. They achieved performances comparable to
special purpose hardwares like Cray XMT and Cray MTA-2 and first identified
the capability of commodity multi-core systems for parallel BFS algorithms.
Scarpazza et al.~\cite{Scarpazza:Cell} use an asynchronous algorithm to
optimize communication between SPE and SPU for running BFS on STI CELL
processors.  Leiserson and Schardl \cite{Leiserson:2010} use Cilk++ runtime
model to implement parallel BFS. Cong et al.~\cite{Cong:2010} present a fast
PGAS implementation of distributed graph algorithms. Another trend is to use
GPU for parallel BFS, for they provide massively parallel hardware threads,
and are more cost-effective than the specialized hardwares.  Generally, GPUs
are good at regular problems with contiguous memory accesses. The challenge of
designing an effective BFS algorithm on GPU is to solve the imbalance between
threads and to hide the cost of data transfer between CPU and GPU.  There are
several works \cite{Hong:2011,Harish-cuda,Luo:2010} working on this direction.

\section{Conclusion}
\label{sec:cls}

The main purpose of this paper is to reduce the communication cost in
distributed breadth-first search (BFS), which is the bottleneck of the
algorithm. We found two problems in previous distributed BFS algorithms:
first, their message formats are not condensed enough; second, broadcasting
messages causes waste.  We propose to reduce the message size by compressing
and sieving.  By compressing the messages, we reduce the communication time by
$52.4\%$.  By sieving the messages with a distributed directory before
compression, we reduce the communication time by another $55.9\%$, achieving a
total $79.0\%$ reduction in communication time and $2.2 \times$ performance
improvement over the baseline implementation.


For future works, we would like to combine our optimization of message size
with other methods such as two-dimensional partitioning~\cite{Buluc:2011} and
hybrid top-down and bottom-up algorithm~\cite{Beamer:EECS-2011-117}. The
potential is clear. A preliminary optimization of the distributed BFS
algorithm in combinational BLAS library~\cite{combblas-buluc}, compressing the
sparse vector using Zlib library, reduces the communication time by $41.9\%$
and increases overall performance by $1.11\times$. By using compressed bitmap
and adding sieve, we expect to further improve its performance.

\end{document}